\numberwithin{equation}{section}
\theoremstyle{plain}
\newtheorem{theorem}{Theorem}[section]
\newtheorem{lemma}[theorem]{Lemma}
\newtheorem{proposition}[theorem]{Proposition}
\newtheorem{conjecture}[theorem]{Conjecture}
\theoremstyle{definition}
\theoremstyle{remark}
\newtheorem{case[theorem]}{Case}
\title{On the CNF-complexity of bipartite graphs containing no $K_{2,2}$'s}
\author{Nets Hawk Katz}
\begin{document}

\maketitle

\begin{abstract} By a probabilistic construction, we find a bipartite graph having 
average degree $d$ which can be expressed
as a conjunctive normal form using $C \log d$ clauses. This contradicts research problem 
1.33 of Jukna.

\end{abstract}

\section{Introduction}

We say $G=(V,W,E)$ is a bipartite graph over $V$ and $W$ if $V$ and $W$ are sets of 
vertices and $E \subset V \times W$ is the set of edges. Given two graphs $G_1$ and $G_2$ 
over $V$ and $W$ with $G_1=(V,W,E_1)$ and $G_2=(V,W,E_2)$,
we may define union and intersection edge-setwise, where
$$G_1 \cup G_2 = (V,W, E_1 \cup E_2),$$
and
$$G_1 \cap G_2=(V,W,E_1 \cap E_2).$$
We may define unions and intersections of families of bipartite graphs over $V$ and $W$.

A special type of graph we consider is $CL(A,B)$, the clause graph of $A \subset V$ and $B 
\subset W$. Then
$$CL(A,B)=\left (V,W,( A \times  W)  \cup ( V \times B) \right).$$
(The graph $CL(A,B)$ is called a clause graph because it is the union of all stars of vertices in
$A$ and $B$.)

We say that sets $A_1,\dots, A_n \subset V$ and  $B_1,\dots,B_n \subset W$ form a 
conjunctive normal form
using $n$ clauses
for a graph $G$ over $V$ and $W$ if
$$G = \bigcap_{i=1}^n  CL(A_i,B_i).$$

In Jukna's recent book \cite{Juk}, he poses the following conjecture as Research Problem 
1.33.

\bigskip

\begin{conjecture} \label{rubbish} There is a universal $\epsilon>0$ so that any bipartite graph $G$ 
having no $K_{2,2}$'s as subgraphs
and having average degree $d$ has no conjunctive normal form using $\lesssim d^{\epsilon}$ 
clauses. \end{conjecture}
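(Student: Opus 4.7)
The plan is to work dually. One checks that $\bigcap_{i=1}^n CL(A_i, B_i) = (V \times W) \setminus \bigcup_{i=1}^n (V \setminus A_i) \times (W \setminus B_i)$, since a pair $(v,w)$ fails to lie in $CL(A_i, B_i)$ iff $v \notin A_i$ and $w \notin B_i$. So to refute the conjecture it suffices to exhibit a $K_{2,2}$-free bipartite graph of average degree $\gtrsim d$ whose complement in $V \times W$ is the union of $O(\log d)$ combinatorial rectangles.

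Fix a large constant $C$ and a small constant $p \in (0,1)$, and take $|V|=|W|= N = d^C$. Choose $n$ independent random rectangles $R_i = \bar A_i \times \bar B_i$ by including each vertex of $V$ in $\bar A_i$ and each vertex of $W$ in $\bar B_i$ independently with probability $p$, with $n$ set so that $(1-p^2)^n = d/N$; hence $n = \Theta(\log d)$. Let $E_0 = (V\times W)\setminus\bigcup_i R_i$. One computes $\mathbb{E}|E_0| = Nd$ and, for a fixed $4$-cycle $\{v_1,v_2\}\times\{w_1,w_2\}$, $\Pr[\text{all 4 edges in } E_0] = (1-\alpha^2)^n = (d/N)^{\gamma}$, where $\alpha = 2p-p^2$ and $\gamma = \log(1-\alpha^2)/\log(1-p^2)\to 4$ as $p \to 0$. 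For $p$ small and then $C$ large, $\mathbb{E}[\#K_{2,2}] \sim N^{4-\gamma} d^\gamma \ll N$. Standard concentration (Markov for the $K_{2,2}$-count; Azuma on the $\Theta(nN)$ independent Bernoulli bits for $|E_0|$ and each row-degree, plus a union bound) then gives, with positive probability, $|E_0| \geq Nd/2$, maximum row-degree $\leq 2d$, and $\#K_{2,2} \leq N/100$.

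The crux is a single cleanup clause. Let $D \subset V$ be the set of $V$-vertices appearing in some $K_{2,2}$ of $E_0$, so $|D| \leq 2\#K_{2,2} \ll N$. Add an extra clause $(A_{n+1}, B_{n+1}) = (V \setminus D, \emptyset)$, whose clause graph is $(V \setminus D) \times W$; intersecting $E_0$ with this removes exactly the edges in rows $D$, killing every $K_{2,2}$ while losing at most $|D| \cdot 2d = o(Nd)$ edges. The resulting graph $E$ is $K_{2,2}$-free, has $\geq Nd/3$ edges (hence average degree $\gtrsim d$), and is presented as $\bigcap_{i=1}^{n+1} CL(A_i, B_i)$ with $n+1 = O(\log d)$ clauses. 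This approach avoids the obvious failure mode of a pure alteration, namely that removing $\mathbb{E}[\#K_{2,2}]$ individual bad edges would require as many singleton ``clauses'', destroying the $O(\log d)$ bound.

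The main technical obstacle is the uniform row-degree bound: row-degrees of $E_0$ share the Bernoulli bits of the $\bar A_i, \bar B_i$ and are correlated. I would condition on the ``type'' $S_v = \{i : v \in \bar A_i\}$, which is Binomial$(n,p)$ and concentrates about $np=\Theta(\log d)$; conditionally on $|S_v|$ the row-degree $\deg(v)$ is Binomial$(N,(1-p)^{|S_v|})$, so Chernoff yields $\Pr[\deg(v)>2d]\leq e^{-cd}$, and the union bound over $N$ vertices is absorbed once $d\gtrsim \log N = C\log d$, i.e.\ for all $d$ larger than an absolute constant. The $K_{2,2}$-count concentration is easier, requiring only Markov after the expectation bound above.
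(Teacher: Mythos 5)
Your construction is the same as the paper's (a random CNF with $(1-p^2)^n=d/N$, $n=\Theta(\log d)$, and the computation that a fixed $K_{2,2}$ survives with probability $(1-(2p-p^2)^2)^n=(d/N)^{\gamma}$, $\gamma\to 4$), but the verification has a genuine gap at the degree bound. Conditioned on $S_v=\{i:v\in\bar A_i\}$ the degree of $v$ is indeed Binomial$(N,(1-p)^{|S_v|})$, but you then treat the conditional mean as if it were $\asymp d$ for typical $|S_v|$. It is not: $(1-p)^{np}=(1-p^2)^n e^{-np^3(1+O(p))/2}=(d/N)\,d^{-(C-1)p(1+O(p))/2}$ (since $np^2\approx(C-1)\ln d$), so the \emph{typical} row degree is $d^{1-\Theta(p)}$, polynomially smaller than $d$; the unconditional mean $d$ is generated by the lower tail of $|S_v|$. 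Quantitatively, $\deg(v)>2d$ only requires $|S_v|\lesssim np-\tfrac{(C-1)}{2}\ln d$, an event of probability about $d^{-p(C-1)/8}$ --- polynomially small, not $e^{-cd}$ --- so in expectation about $N d^{-p(C-1)/8}\gg 1$ rows exceed degree $2d$, and the maximum row degree (governed by the minimum of $N$ copies of Binomial$(n,p)$, roughly $\sqrt{2np\ln N}$ below $np$) is of order $d^{1+\Theta(\sqrt{p})}$ for your fixed $C$. Hence the step ``losing at most $|D|\cdot 2d=o(Nd)$ edges'' is unsupported as written. A second, smaller issue: bounded-difference (Azuma) concentration for $|E_0|$ over the $\Theta(nN)$ bits is vacuous, since flipping a single $X_{i,v}$ or $Y_{i,w}$ can change $|E_0|$ by as much as $N$; a Chebyshev/second-moment computation works instead (edge indicators for pairs sharing no vertex are independent, and the row/column covariances contribute only $O(N d^{2+O(pC)})$ to the variance).

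Both defects are avoidable, and the paper's cleanup shows one way: it proves only a per-vertex statement, $d^{1-\epsilon}\lesssim d_v\lesssim d^{1+\epsilon}$ with probability $9/10$ (with $p$ chosen small in terms of $\epsilon$ --- exactly why its conclusion is average degree $\gtrsim d^{1-\epsilon}$ rather than $\gtrsim d$), and then keeps only rows of large degree that meet no $K_{2,2}$; restricting the row set preserves the number of clauses, so your extra clause $(V\setminus D,\emptyset)$ is a correct but unnecessary device. If you want to keep your stronger conclusion (average degree $\gtrsim d$), replace the uniform degree bound by a first-moment bound on the edges you actually delete: $\mathbb{E}\sum_{v\in D}\deg(v)$ is the expected number of configurations consisting of an edge together with a $K_{2,2}$ through its row, and a per-clause survival probability of $1-5p^2+O(p^3)$ gives at most $N^{5}(d/N)^{5-O(p)}=N^{O(p)}d^{5-O(p)}\ll Nd$ for your large $C$ and small $p$; Markov plus the Chebyshev bound for $|E_0|$ then repairs the argument without any maximum-degree claim.
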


A positive result for conjecture \ref{rubbish} would be important because it would allow one to construct a Boolean function
so that any low depth circuit computing it would have to have many gates. See (\cite{Juk}, Chapter 11).

Unfortunately, we prove

\bigskip

\begin{theorem} \label{main}  For all $\epsilon>0$ given $d$ sufficiently large, there is a bipartite graph $G$
with average degree  $\gtrsim d^{1-\epsilon}$ so that $G$ has 
a conjunctive normal form with at most $O(\log d)$ clauses. \end{theorem}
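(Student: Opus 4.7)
I would prove Theorem \ref{main} by a probabilistic construction. Fix $\epsilon>0$ and $d$ large. Take $|V|=|W|=N=\lceil d^{2}\rceil$, a constant $p\in(0,1)$ (e.g.\ $p=1/2$), and set $\alpha:=1-(1-p)^{2}=p(2-p)$. Choose $n=\lceil C\log d\rceil$ with $C=(1+\epsilon)/|\log\alpha|$, and for $i=1,\dots,n$ independently sample $A_{i}\subset V$ and $B_{i}\subset W$ by placing each vertex in the set with probability $p$. Let $G=\bigcap_{i=1}^{n}CL(A_{i},B_{i})$; by construction $G$ has an $n=O(\log d)$-clause CNF, so the only remaining task is to show that with positive probability the average degree of $G$ is $\gtrsim d^{1-\epsilon}$.

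For the first moment, the events $(v,w)\in E(CL(A_{i},B_{i}))$ are independent in $i$ with common probability $\alpha$, so $\Pr[(v,w)\in E(G)]=\alpha^{n}$ and $\mathbb{E}\,|E(G)|=N^{2}\alpha^{n}$. The choice of $C$ gives $N\alpha^{n}=d^{1-\epsilon}$, the target average degree. For concentration I would apply Chebyshev's inequality: the indicators $X_{v,w}$ and $X_{v',w'}$ are independent when the edges share no vertex. For the $\Theta(N^{3})$ pairs of distinct edges sharing a common vertex (say $v=v'$), one computes the joint probability per clause as $\rho:=p+(1-p)p^{2}$, and a little algebra gives $\rho/\alpha^{2}=1+p(1-p)^{3}/\alpha^{2}$, a constant strictly greater than $1$. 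The total covariance contribution is then $O(N^{3}\alpha^{2n}(\rho/\alpha^{2})^{n})$, so $\mathrm{Var}(|E(G)|)/(\mathbb{E}\,|E(G)|)^{2}=O((\rho/\alpha^{2})^{n}/N)$. Since $n=\Theta(\log d)$ and $\rho/\alpha^{2}$ is a constant greater than $1$, the factor $(\rho/\alpha^{2})^{n}$ is a positive power of $d$, so the ratio is $d^{-\Omega(1)}=o(1)$ for $N=d^{2}$; Chebyshev then yields $|E(G)|\geq\tfrac{1}{2}\mathbb{E}\,|E(G)|$ with positive probability, proving the theorem as stated.

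The main obstacle is this second-moment calculation, specifically ensuring the polynomial covariance blowup $(\rho/\alpha^{2})^{n}$ is dominated by $N$; the choice $N=d^{2}$ handles this, and one can harmlessly take $N$ larger at no cost to the clause count if $\epsilon$ is very large. To additionally achieve $K_{2,2}$-freeness---which is what is actually needed to refute Conjecture~\ref{rubbish}, as the title and abstract indicate---I would re-run the argument with $N\sim d^{3}$ and $p$ close to $1$: the per-clause $K_{2,2}$-probability is $\beta=p^{2}(2-p^{2})$, and the crucial asymptotic $|\log\beta|/|\log\alpha|\to 4$ as $p\to 1^{-}$ lets one arrange the expected number of $K_{2,2}$s in $G$ to be $o(\mathbb{E}\,|E(G)|)$; an alteration removing one edge per $K_{2,2}$ then produces a $K_{2,2}$-free subgraph of essentially the same average degree. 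The delicate final point, and arguably the real heart of the matter, is to perform this cleanup without inflating the clause count above $O(\log d)$, since each individual edge deletion naively costs a clause.
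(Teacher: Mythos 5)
There is a genuine gap, and it sits exactly at the step you yourself single out as ``the real heart of the matter.'' The first paragraph of your proposal only establishes the literal statement without $K_{2,2}$-freeness, which is vacuous on its own (the complete bipartite graph is the single clause $CL(V,W)$ and has average degree $N$); the actual content of Theorem \ref{main}, as needed to refute Conjecture \ref{rubbish}, is the $K_{2,2}$-free version. For that you propose an alteration deleting one edge per $K_{2,2}$, and then concede you do not know how to do this without inflating the clause count. That concession is a real gap, not a technicality: an edge-deleted subgraph of a CNF graph has no reason to admit a short CNF (killing a single edge $(v,w)$ already costs the extra clause $CL(V\setminus\{v\},W\setminus\{w\})$), and with your parameters you would be deleting polynomially many edges.

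The paper's resolution is to prune vertices, not edges: for any $V'\subseteq V$ the restricted graph
$$G'=\Bigl(V',\,W,\,\bigcap_{i=1}^{n}\bigl(((A_i\cap V')\times W)\cup(V'\times B_i)\bigr)\Bigr)$$
is again a CNF with exactly the same $n=O(\log d)$ clauses, so vertex deletion is free. Making this work requires two changes to your sketch. First, the number of $K_{2,2}$'s must be negligible compared to the number of \emph{vertices}, not merely to the number of edges: the expected count is roughly $N^{\delta}d^{4-\delta}$ (your computation $|\log\beta|/|\log\alpha|\to 4$ is the same as Lemma \ref{countrectangles}), so one needs $N$ polynomially larger than $d^{4}$ --- the paper takes $N=d^{10}$ --- whereas at your $N\sim d^{3}$ the vertices participating in $K_{2,2}$'s would typically outnumber $V$ itself for small $\epsilon$, which is the interesting case. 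Second, a global second-moment bound on $|E(G)|$ does not survive the pruning, because the deleted vertex set is correlated with the edge structure; the paper instead proves a per-vertex statement (Lemma \ref{countedges}, Chernoff applied first to $|\{i:v\notin A_i\}|$ and then to the successive intersections of the $B_i$'s) showing each fixed vertex has degree between $d^{1-\epsilon}$ and $d^{1+\epsilon}$ with probability at least $9/10$. One then fixes an instance with at least $N/2$ such vertices and at most $2d^{5}$ copies of $K_{2,2}$, takes $V'$ to be the high-degree vertices lying in no $K_{2,2}$, and the resulting $G'$ is $K_{2,2}$-free with average degree $\gtrsim d^{1-\epsilon}$ and the same $O(\log d)$ clauses. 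So while your per-clause probabilities $\alpha$ and $\beta$ and the limit ratio $4$ match the paper's calculations, the cleanup mechanism and the choice of $N$ --- the parts that make the clause count survive --- are missing.
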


(Here we use the notation $A \gtrsim B$ to mean that there is a universal constant $C$, independent of $d$ so that
$CA \geq B$. We have stated theorem \ref{main} in this way because $d$ will be a parameter at the beginning
of our construction. Of course $\log d \sim  \log ( d^{1-\epsilon})$.)

Clearly, theorem \ref{main} contradicts conjecture \ref{rubbish}. Indeed, we remark that aside from constants, the
theorem is sharp. Given a $K_{2,2}$-free graph $G=(V,W,E)$ with average degree $d$, we may assume WLOG that there are
at least $d$ vertices $v_1,\dots v_d$ of $V$ adjacent to more than two elements of $W$ each. We let $W_v$ be
the set of elements of $W$ adjacent to $v$.  Then the sets $W_{v_1},\dots, W_{v_d}$ are distinct since in particular
each intersection of two of them contains at most one element by the $K_{2,2}$-free condition. However, if we have
$$G = \bigcap_{i=1}^n  CL(A_i,B_i),$$
then we have
$$W_{v} = \bigcap_{i: v \notin A_i}  B_{i}.$$
Thus there are at most $2^n$ distinct sets $W_v$. Hence $n \geq \log_2 d$.

We now explain the idea behind theorem \ref{main}. We consider the simplest model of a random bipartite graph
between sets of vertices having $N$ elements each. We choose i.i.d. Bernoulli random variables
$X_{v,w}$ indexed by $V \times W$. We define the random graph
$$G=(V,W,E),$$
where
$$E=\{(v,w):  X_{v,w}=1 \}.$$
To get average degree close to $d$, we set the probability that a given $X_{v,w}=1$ to be ${d \over N}$.
We should imagine that $N$ is quite large compared to $d$, say $N=d^{10}$. We calculate the probability that
there is a $K_{2,2}$ involving vertices $v_1,v_2,w_1,w_2$. By the independence of the random variables, clearly
the probability is ${d^4 \over N^4}$. Thus we expect the graph $G$ to have only $d^4$ copies of $K_{2,2}$. But this
is quite small compared to the number of vertices of $G$. By removing $2d^4$ vertices, we should be able to
get a $K_{2,2}$-free graph.

To prove theorem \ref{main}, we will replace this simple model of a random graph by a random conjunctive normal
form. We will show that it has roughly the same behavior as the random graph so that after removing a small number 
of vertices, which
we can do without changing the number of clauses in the conjunctive normal form, we arrive at a $K_{2,2}$-free
graph.

Finally, we make the remark that a simple argument using Cauchy-Schwarz shows that to get a $K_{2,2}$-free graph
of average degree $d$ on $N$ vertices, we need $N \gtrsim d^2.$ We remark that this Cauchy-Schwarz argument in fact imposes
a great deal of structure on the graph $G$. This lends us the temerity to make the following conjecture:

\bigskip

\begin{conjecture} \label{maybenotrubbish} There is a universal $\epsilon>0$ so that any bipartite graph $G$ 
having no $K_{2,2}$'s as subgraphs
and having average degree $d$ and fewer than $d^{2+\epsilon}$ vertices has no conjunctive normal form 
using $\lesssim d^{\epsilon}$ 
clauses. \end{conjecture}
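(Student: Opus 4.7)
My plan is to recast the CNF-complexity of $G$ as the rectangle cover number of its non-edges, and to lower bound that cover number by constructing a large fooling set, exploiting the rigid combinatorial structure forced on $G$ by the near-extremal hypothesis $|V\cup W|<d^{2+\epsilon}$.

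The first step is the standard dual reformulation. The identity $G=\bigcap_{i=1}^{n}CL(A_{i},B_{i})$ is equivalent to the covering
\[
(V\times W)\setminus E=\bigcup_{i=1}^{n}(V\setminus A_{i})\times(W\setminus B_{i}),
\]
so the CNF-complexity of $G$ equals the minimum number of combinatorial rectangles needed to cover its non-edges. A lower bound of $\gtrsim d^{\epsilon}$ on this cover number is implied by the existence of a \emph{fooling set} of the same size: non-edges $(v_{1},w_{1}),\dots,(v_{k},w_{k})$ such that for every $i\neq j$ at least one of $(v_{i},w_{j})$, $(v_{j},w_{i})$ is an edge of $G$, since any combinatorial rectangle can then contain at most one such non-edge.

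Next I would extract structure from the near-extremality. The hypothesis $|V\cup W|<d^{2+\epsilon}$ is near-tight for the Cauchy--Schwarz bound alluded to in the introduction, and this forces $G$ to be approximately $d$-regular and to have, as a near-design, almost every pair of $V$-vertices sharing exactly one common $W$-neighbor. I would attempt to build the fooling set greedily: having chosen $(v_{1},w_{1}),\dots,(v_{i-1},w_{i-1})$, select a new non-edge $(v_{i},w_{i})$ so that for each previously chosen index $j<i$ at least one of $(v_{i},w_{j})$, $(v_{j},w_{i})$ is an edge. Near-regularity makes each previously chosen $w_{j}$ and $v_{j}$ incident to $\sim d$ many vertices, so the count of candidate $(v_{i},w_{i})$ resolving all prior pairs remains positive as long as $k=d^{\epsilon}$ stays well below $d$.

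The main obstacle, and almost certainly the reason the statement is offered as a conjecture rather than a theorem, is that every standard lower-bound technique for CNF-complexity appears to saturate at $O(\log d)$ for $K_{2,2}$-free graphs. The types argument from the introduction gives only $\log|V|\leq(2+\epsilon)\log d$. The log-rank method gives $\mathrm{CNF}(G)\geq\log_{2}(\operatorname{rank}_{\mathbb{R}}(\bar M)+1)$, but $MM^{T}=dI+A$ with $A\in\{0,1\}^{V\times V}$ already forces $\operatorname{rank}_{\mathbb{R}}(M)\sim d^{2}$ and hence yields only $2\log d$. Discrepancy-type arguments are likewise weak for these sparse, highly structured adjacency matrices. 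Pushing past the $\log d$ barrier to $d^{\epsilon}$ therefore requires a genuinely combinatorial argument that exploits the $K_{2,2}$-free hypothesis beyond mere distinctness of neighborhoods, presumably by leveraging the algebraic rigidity of near-extremal $K_{2,2}$-free graphs (incidence graphs of projective planes and their quantitative analogues). Converting such rigidity into a polynomial-sized fooling set is the hardest step; absent a dedicated quantitative structural theorem for near-extremal $K_{2,2}$-free graphs, the greedy construction above is likely to stall well before the target size $d^{\epsilon}$, and any successful proof will probably need a fundamentally new technique in the communication-complexity toolbox.
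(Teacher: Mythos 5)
The statement you are addressing is Conjecture~\ref{maybenotrubbish}: the paper states it without proof, so there is no argument of the author's to compare yours against, and your proposal --- as you concede in its final paragraph --- does not amount to a proof either. Your dual reformulation is correct: writing $G=\bigcap_i CL(A_i,B_i)$ is the same as covering the non-edges by the rectangles $A_i^c\times B_i^c$, and a fooling set of size $k$ among the non-edges would force at least $k$ rectangles. Your assessment that the standard lower-bound techniques saturate around $\log d$ is also a fair account of why the statement is offered as a conjecture.

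The genuine gap is that the one concrete strategy you propose is not merely hard to push to size $d^{\epsilon}$; it is provably impossible for \emph{every} $K_{2,2}$-free graph. Suppose $(v_1,w_1),\dots,(v_k,w_k)$ is a fooling set for the non-edges of $G$. If $v_i=v_j$ for some $i\neq j$ then both $(v_i,w_j)=(v_j,w_j)$ and $(v_j,w_i)=(v_i,w_i)$ are non-edges and the fooling condition fails, so the $v_i$ are pairwise distinct, and likewise the $w_i$. For each of the $\binom{k}{2}$ unordered pairs $\{i,j\}$ the condition demands that at least one of the two cells $(v_i,w_j)$, $(v_j,w_i)$ be an edge, and distinct pairs demand distinct cells; hence the induced bipartite graph on $\{v_1,\dots,v_k\}\times\{w_1,\dots,w_k\}$ has at least $\binom{k}{2}$ edges. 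But that induced graph is itself $K_{2,2}$-free, so by K\H{o}v\'ari--S\'os--Tur\'an it has at most $\frac{1}{2}\bigl(1+\sqrt{4k-3}\bigr)k$ edges, which forces $(k-1)(k-7)\leq 0$, i.e.\ $k\leq 7$. No fooling set of size growing with $d$ exists, and the near-design rigidity you invoke cannot rescue this, since the obstruction uses only $K_{2,2}$-freeness. Your greedy count also points the wrong way: in a near-extremal $G$ with $\sim d^2$ vertices per side and degrees $\sim d$, the event that a fresh non-edge resolves one fixed earlier pair has probability about $2/d$ (an edge is the rare event here, not the likely one), so demanding that it resolve all $i-1$ earlier pairs costs a factor of roughly $(2/d)^{i-1}$ against only $\sim d^4$ candidates, and the process stalls after an absolute constant number of steps --- consistent with the bound $k\leq 7$. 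Any genuine attack on the conjecture must start from a lower-bound method that is not already annihilated by $K_{2,2}$-freeness itself.
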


{\bf Acknowledgements:}  The author is partially supported by NSF grant DMS-1001607
and a fellowship from the Guggenheim foundation. He would like to thank Esfandiar Haghverdi
for helpful discussions.

\section{Main Argument}

We now begin our proof of theorem \ref{main}. We start by defining a random conjunctive normal form, designed
to have average degree around $d$ with $V$ and $W$ being set of size $N=d^{10}$. We pick $p$ to be small but independent
of $d$. (Choosing $p={1 \over 100}$ would suffice.) Now we define i.i.d. Bernoulli random variables
$X_{j,v}$ and $Y_{j,w}$ indexed respectively by $\{1, \dots , n\} \times V$ and  
$\{1, \dots , n\} \times W$. We set the probability for each of $X_{j,v}$ and $Y_{j,w}$ to be 1 to be 
$p$.  Now we define 
$$A_i=\{v : X_{i,v}=0 \},$$
and
$$B_i=\{w: Y_{i,w}=0 \}.$$

We choose $n$ so that 
\begin{equation}  \label{degree} (1-p^2)^n \sim {d \over N}. \end{equation}
We achieve equation \ref{degree} by picking $n$ to be the nearest integer to $({1 \over p^2}) \ln ({N \over d}).$
In particular, this means that $n$ is $O(\log d)$.
We let
$$G = \bigcap_{i=1}^n  CL(A_i,B_i).$$
We will show that after a little pruning, we can modify $G$ to have no $K_{2,2}'$ and still have
average degree of at least $d^{1-\epsilon}.$

We now investigate the number of $K_{2,2}$'s in the graph $G$.

\bigskip

\begin{lemma} \label{countrectangles} Let $G$ be above. Let $v_1,v_2 \in V$ distinct and
$w_1,w_2 \in W$ distinct. The probability that there is a $K_{2,2}$ in $G$ on the vertices
$v_1,w_1,v_2,w_2$ is at most ${d^{4-\delta} \over N^{4-\delta}},$ where $\delta$ is small
depending only on $p$. \end{lemma}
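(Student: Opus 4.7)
I would compute the probability of a $K_{2,2}$ on $v_1,v_2,w_1,w_2$ directly from the Bernoulli variables and then convert the answer into a power of $(1-p^2)^n$, which by construction is comparable to $d/N$.

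First, I would analyze a single clause. By definition, $(v,w)$ belongs to $CL(A_i,B_i)$ iff $X_{i,v}=0$ or $Y_{i,w}=0$, so all four edges $(v_j,w_k)$ belong to $CL(A_i,B_i)$ iff for every pair $(j,k)$, either $X_{i,v_j}=0$ or $Y_{i,w_k}=0$. A short case analysis shows that this per-clause event is equivalent to ``$X_{i,v_1}=X_{i,v_2}=0$ or $Y_{i,w_1}=Y_{i,w_2}=0$'': indeed, if some $X_{i,v_{j_0}}=1$ then both $Y_{i,w_k}$ must vanish, and symmetrically. Inclusion--exclusion gives per-clause probability $2(1-p)^2-(1-p)^4 = 1-(2p-p^2)^2$, and independence across clauses yields
\[
P(K_{2,2}\text{ on }v_1,v_2,w_1,w_2) \;=\; \bigl[\,1-(2p-p^2)^2\,\bigr]^n.
\]

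Next, I would rewrite this in terms of $(1-p^2)^n$ to line it up with $d/N$. Setting
\[
\alpha(p)=\frac{\log\bigl(1-(2p-p^2)^2\bigr)}{\log(1-p^2)},
\]
we have $P(K_{2,2}) = \bigl[(1-p^2)^n\bigr]^{\alpha(p)}$, which by the defining relation (\ref{degree}) is comparable to $(d/N)^{\alpha(p)}$. The lemma therefore reduces to showing that $\alpha(p) > 4-\delta$ for some $\delta=\delta(p)>0$, with a bit of slack so that the $O(1)$ multiplicative constant implicit in $(1-p^2)^n \sim d/N$ is absorbed once $d$ is large.

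The one place where a genuine calculation is needed, and the main obstacle, is verifying that $\alpha(p)$ is strictly less than $4$ with a quantitative gap depending only on $p$. I would do this by Taylor expansion: since $1-(2p-p^2)^2 = 1-4p^2+4p^3-p^4$ while $(1-p^2)^4 = 1-4p^2+6p^4+O(p^6)$, the cubic term $+4p^3$ in the first expression forces $1-(2p-p^2)^2 > (1-p^2)^4$, hence $\alpha(p)<4$. More precisely, $\alpha(p) = 4-4p+O(p^2)$, so any $\delta$ slightly larger than $4p$ works for $p$ small, and the remaining $O(1)$ constant (from rounding $n$ to an integer in (\ref{degree})) is harmless for $d$ large because $(N/d)^{\delta-(4-\alpha(p))}\to\infty$.
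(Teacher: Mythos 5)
Your proposal is correct and follows essentially the same route as the paper: per-clause analysis of which configurations of the $X_{i,v_j}$, $Y_{i,w_k}$ keep all four edges, inclusion--exclusion giving a per-clause survival probability of $1-4p^2+O(p^3)$, independence across clauses, and comparison with $(1-p^2)^{n(4-\delta)}$ via equation (\ref{degree}). You simply carry out explicitly (exact value $1-(2p-p^2)^2$, the exponent $\alpha(p)=4-4p+O(p^2)$, and absorption of the rounding constant) the details the paper leaves implicit.
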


\begin{proof} We observe that $v_1,w_1,v_2,w_2$ fail to be a $K_{2,2}$ only when there is some $j$ for which
one of $(v_1,w_1),(v_1,w_2),(v_2,w_1),(v_2,w_2)$ lies in the product $A_j^c \times B_j^c$. These are independent
events for different $j$. Now using inclusion-exclusion, we easily see that the probability that a $K_{2,2}$ is not
ruled out by the $j$th clause is $1-4p^2 + O(p^3)$. Now in light of equation \ref{degree}, the lemma is proved
\end{proof}

The reader should note that it is here that we have seriously used the presence of more than $\log d$ clauses.
The lemma doesn't work unless $p$ is small.

We still need to ensure that most vertices of the graph have a lot of degree.

\bigskip

\begin{lemma} \label{countedges} Let $G$ be as above. Let $\epsilon>0$ and $d$ sufficiently large. Let $v \in V$.
Then the probability that the degree $d_v$ of $v$ is satisfies 
$$ d^{1-\epsilon} \lesssim d_v \lesssim d^{1+\epsilon}$$ is at  
least ${9 \over 10}$. 
\end{lemma}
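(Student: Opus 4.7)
The plan is to reveal the randomness in two stages. First I expose the variables $X_{i,v}$ for $i=1,\ldots,n$ and set $K := |\{i : X_{i,v}=1\}|$; only afterwards do I look at the $Y_{i,w}$ that determine $v$'s neighbors. The key observation is that $(v,w)\in E$ iff for every $i$ one has $X_{i,v}=0$ or $Y_{i,w}=0$, which is the same as requiring $Y_{i,w}=0$ for every $i$ with $X_{i,v}=1$. Hence, conditional on the set $S=\{i:X_{i,v}=1\}$, the events $\{(v,w)\in E\}$ are mutually independent across $w\in W$, each with probability $(1-p)^{|S|}=(1-p)^K$. The problem therefore reduces to controlling $K$ and then, given $K$, a sum of $N$ i.i.d.\ Bernoulli variables with parameter $(1-p)^K$.

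For the first reduction, $K$ is $\mathrm{Binomial}(n,p)$ with mean $np$ of order $p^{-1}\log d$. A standard Chernoff bound yields $\Pr[|K-np|\ge \eta np]\le 2e^{-c\eta^{2} np}$, which for any fixed $\eta>0$ is much smaller than $\tfrac{1}{20}$ once $d$ is large.

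Next I compute the typical value of the conditional mean $N(1-p)^{K}$. From (\ref{degree}), $n\log(1-p^{2})=(1+o(1))\log(d/N)$. Taylor-expanding $\log(1-p)$ and $\log(1-p^{2})$ in $p$ gives $np\log(1-p)=(1+O(p))\log(d/N)$, hence
$$N(1-p)^{np}=N\cdot(d/N)^{1+O(p)}=d^{\,1-\alpha(p)},$$
with $\alpha(p)\to 0$ as $p\to 0$ (using $N=d^{10}$). Allowing $K$ to vary in $(1\pm\eta)np$ only multiplies this expression by $d^{\pm O(\eta)}$. Thus, provided $p$ and $\eta$ are chosen small enough in terms of $\epsilon$, the event $|K-np|\le \eta np$ forces $N(1-p)^{K}\in[d^{1-\epsilon/2},d^{1+\epsilon/2}]$. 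This is legitimate since the theorem fixes $\epsilon$ first and allows $p$ to depend on it.

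Finally, conditional on such a typical $K$, the degree $d_v$ is a sum of $N$ independent Bernoullis with parameter $(1-p)^{K}$ and conditional mean at least $d^{1-\epsilon/2}$, which tends to infinity with $d$. A second Chernoff bound places $d_v$ within a constant factor of this mean with probability at least $1-\tfrac{1}{20}$, and a constant factor is absorbed into an extra $d^{o(1)}$. A union bound over the two failure events yields the desired probability $\ge \tfrac{9}{10}$. The only delicate point, and the step I would check most carefully, is the Taylor expansion identifying $(1-p)^{np}$ in terms of $(1-p^{2})^{n}$: the discrepancy is a factor $d^{O(p)}$, which is harmless only because $p$ may be chosen small compared to $\epsilon$ from the outset.
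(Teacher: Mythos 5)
Your proof is correct, and its first stage coincides with the paper's: both begin by applying a concentration bound to $K=|\{i: v\notin A_i\}|$, which is Binomial$(n,p)$ with mean $np\sim p^{-1}\log d$. After that the two arguments genuinely diverge. The paper reveals the sets $B_{i_1},\dots,B_{i_m}$ one at a time, tracks the partial intersections $d_j$, applies Proposition \ref{Chernoff} at each of the $m$ steps (with a failure budget of $1/(20n)$ per step and the proviso $d_{j-1}\geq d^{1/2}$), and concludes by induction. You instead observe that, conditional on $S=\{i: X_{i,v}=1\}$, the events $\{(v,w)\in E\}$ are independent across $w\in W$ with common probability $(1-p)^{K}$, so $d_v$ is conditionally Binomial$(N,(1-p)^{K})$ and a single concentration estimate finishes the job; this decomposition is cleaner, avoiding the induction and the union bound over steps. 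One caveat: since the conditional success probability $(1-p)^{K}$ is of order $d^{1\pm\epsilon/2}/N$, i.e.\ tiny, the additive Hoeffding inequality in the form of Proposition \ref{Chernoff} gives nothing here; you need the multiplicative Chernoff bound you invoke (or simply Chebyshev, since the conditional mean $\geq d^{1-\epsilon/2}$ tends to infinity), which is standard but is a different tool than the one step-by-step argument of the paper gets away with, because there each step is a Binomial with constant parameter $1-p$. Finally, your explicit handling of the $d^{O(p)}$ discrepancy between $N(1-p)^{np}$ and $d$ --- requiring $p$ to be chosen small in terms of $\epsilon$ --- addresses a point that the paper's last step (``which in light of equation \ref{degree} implies the desired result'') passes over silently; the same requirement is implicitly present there, so flagging it is added precision rather than a deviation.
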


We delay the proof of lemma \ref{countedges} to point out why lemmas \ref{countrectangles} and \ref{countedges}
imply theorem \ref{main}. In light of lemma \ref{countedges}, the expected number of vertices of $V$ having degree 
 $\gtrsim d^{1-\epsilon}$ edges is at least ${9N \over 10}$. Therefore, with probability at least ${4 \over 5}$, the
graph $G$ has at least ${N \over 2}$ vertices in $V$ with degree  $\gtrsim d^{1-\epsilon}$. On the other hand
from lemma \ref{countrectangles}, the expected number of $K_{2,2}$'s is at most $N^{\delta} d^{4-\delta}$ which
by picking $p$ sufficiently small is bounded by $d^5$. Thus with probability ${1 \over 2}$ there are at most $2d^5$
copies of $K_{2,2}$ in $G$.  Thus there exists an instance of $G$ with ${N \over 2}$ vertices of $V$ having degree
$\gtrsim d^{1-\epsilon}$ and having at most $2 d^5$ copies of $K_{2,2}$. Let $V^{\prime}$ be the set of vertices
having degree t $\gtrsim d^{1-\epsilon}$ and not participating in any $K_{2,2}$'s.  Define
$$G^{\prime}=(V^{\prime}, W, E^{\prime}),$$
where 
$$E^{\prime} = \bigcap_{i=1}^n \left ( (A_i \cap V^{\prime}) \times W) \cup (V^{\prime} \times B_i)  \right).$$
Then $G^{\prime}$ satisfies the conclusion of theorem \ref{main}.

It remains to prove lemma \ref{countedges}. This will be a relatively simple application of the Chernoff-Hoeffding
bounds. We shall use the following simple form of them.

\bigskip

\begin{proposition} \label{Chernoff}  Given $M$ i.i.d. Bernoulli variables $X_1, \dots X_M$,
where the probability of $X_j=1$ being $p$, then if $q$ is the probability that
$$| (\sum_{j=1}^M X_j ) - pM  |  \geq \mu M ,$$
then
$$q \leq 2 e^{-2 \mu^2 n}.$$
\end{proposition}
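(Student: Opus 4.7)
The plan is to apply the standard exponential-moment (Chernoff) method to the sum $S_M = \sum_{j=1}^M X_j$. First I would reduce to a one-sided bound: the probability that $S_M - pM \geq \mu M$ and the probability that $S_M - pM \leq -\mu M$ may be bounded separately by the same quantity (the lower tail follows from the upper-tail argument applied to the Bernoulli variables $1-X_j$, which have parameter $1-p$), and a union bound then produces the factor of $2$ in front.

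For the one-sided bound, for any parameter $t>0$ I would apply Markov's inequality to the random variable $e^{t(S_M - pM)}$ and use independence of the $X_j$ to factor the moment generating function:
$$P\bigl(S_M - pM \geq \mu M\bigr) \;\leq\; e^{-t\mu M}\, E\bigl[e^{t(S_M - pM)}\bigr] \;=\; e^{-t\mu M} \prod_{j=1}^{M} E\bigl[e^{t(X_j - p)}\bigr].$$

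The decisive input is an estimate on the moment generating function of each centered Bernoulli variable $X_j - p$, which takes values in the interval $[-p,\,1-p]$ of length $1$. Hoeffding's lemma, proved by writing $e^{tx}$ as the convex combination of its values at the two endpoints of the interval (using convexity of the exponential) and optimizing the resulting elementary expression in $t$, gives $E[e^{t(X_j - p)}] \leq e^{t^2/8}$. Substituting this into the product yields $P(S_M - pM \geq \mu M) \leq e^{-t\mu M + Mt^2/8}$.

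It remains to optimize in $t$. Setting $t = 4\mu$ makes the exponent equal to $-2\mu^2 M$, and doubling to account for the two tails gives the stated bound. The only step with any substance is Hoeffding's lemma itself, which is a short self-contained calculation; the rest of the argument is a mechanical application of Markov's inequality together with independence.
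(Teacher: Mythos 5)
Your proof is correct. The paper itself does not prove Proposition \ref{Chernoff} at all --- it simply defers to the citation \cite{Hoeff} --- and your argument is precisely the standard one found there: reduce to a one-sided tail (with the lower tail handled via $1-X_j$ and a union bound giving the factor $2$), apply Markov's inequality to $e^{t(S_M-pM)}$, factor the moment generating function by independence, invoke Hoeffding's lemma $E[e^{t(X_j-p)}]\leq e^{t^2/8}$ for a centered variable supported on an interval of length $1$, and optimize at $t=4\mu$ to get the exponent $-2\mu^2M$. One small point worth noting: your bound has $M$ in the exponent, which is the correct form and is what the paper's applications require; the ``$n$'' appearing in the paper's statement of the proposition is evidently a typo for $M$, so your version is the one that should be used.
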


Proposition \ref{Chernoff} follows from the results in \cite{Hoeff}.

Now we investigate the degree of a vertex $v$ in $G$. We let $W(v)$ be the set of vertices in $W$
which are adjacent to $v$. By the definition of $G$, we have that
$$W(v)=\bigcap_{i: v \notin A_i}  B_i.$$
In light of proposition \ref{Chernoff} there is a universal constant $C$ so that with probability ${19 \over 20}$
we have that
$$pn - C \sqrt{n}  \leq |\{i: v \notin A_i\}|  \leq pn + C \sqrt{n}.$$
We denote $m= |\{i: v \notin A_i\}|$ and denote by $i_1, \dots i_m$ the elements of $\{i: v \notin A_i\}$.
From now on, we work in the case
$$pn - C \sqrt{n}  \leq m \leq pn + C \sqrt{n}.$$
We name the sizes of the partial intersections
$$d_j = | \bigcap_{l=1}^j A_{i_l} |.$$
then $d_m$ is the degree of $v$. 
Now, in light of proposition \ref{Chernoff} we have for $d$ sufficiently large that with probability
at least $1-{1 \over 20 n}$, as long as $d_{j-1} \geq d^{1 \over 2}$, we have that
$$(1- p-d^{-{1 \over 6}}) d_{j-1}  \leq  d_j \leq (1- p+d^{-{1 \over 6}}) d_{j-1}.$$
Thus by induction, we see that as long as we are in the case where all these events hold, which has
probabiliy at least ${9 \over 10}$, we have the inequality
$$    N(1-p-d^{-{1 \over 6}})^{pn + C \sqrt{n} } \leq  d_m \leq N(1-p+d^{-{1 \over 6}})^{pn - C \sqrt{n} },$$
which for $d$ sufficiently large, we can rewrite as
$$ N d^{-\epsilon} (1-p)^{pn} \leq d_m \leq N d^{\epsilon} (1-p)^{pn},$$
which in light of equation \ref{degree} implies the desired result:
$$d^{1-\epsilon} \lesssim d_m \lesssim d^{1+\epsilon}.$$

\bigskip
\bigskip

\tiny

\textsc{N. KATZ, DEPARTMENT OF MATHEMATICS, INDIANA UNIVERSITY, BLOOMINGTON IN}

{\it nhkatz@indiana.edu}

\bigskip


\begin{thebibliography}{5}

\vskip.125in

\bibitem[Hoeff]{Hoeff} W. Hoeffding 
{\it Probability inequalities for sums of bounded random variables }
Journal of the American Statistical Association (1966)  {\bf 58}   13 - 30

\bibitem[Juk]{Juk} S. Jukna  {\it  Boolean Function Complexity: Advances and Frontiers}  
Springer,  Algorithms and Combinatorics
(2012)

\end{thebibliography}
\end{document}